\newcommand{\rank}{\mathrm{rank}}
\newcommand{\mergh}{\mathop{\varodot}}
\newcommand{\mergv}{\mathop{\varobar}}
\newtheorem{theorem}{Theorem}
\numberwithin{theorem}{section}
\newtheorem{example}[theorem]{Example}
\title{Optimal top dag compression}
\author[M.~Lohrey]{Markus Lohrey}
\author[C.~P.~Reh]{Carl Philipp Reh}
\author[K.~Sieber]{Kurt Sieber}
\email{\{lohrey.reh,sieber\}@eti.uni-siegen.de}
\thanks{This work has been supported by the DFG research project
LO 748/10-1 (QUANT-KOMP)}
\begin{document}

\begin{abstract}
It is shown that for a given ordered node-labelled tree of size $n$ and with
$s$ many different node labels, one can  construct in linear time a top dag
of height $O(\log n)$ and size $O(n / \log_\sigma n) \cap O(d \cdot \log n)$,
where $\sigma  = \max\{ 2, s\}$ and $d$ is the size of the minimal dag.
The size bound $O(n / \log_\sigma n)$ is optimal and improves on previous bounds.
\end{abstract}

\maketitle

\section{Introduction}
Top dags were introduced by Bille et al.~\cite{BilleGLW15} as a formalism for the compression of unranked node-labelled
trees. Roughly speaking, the top dag for a tree $t$ is the dag-expression of an expression that evaluates to $t$, where
the expression builds $t$ from edges using two merge operations (horizontal and vertical merge).
In \cite{BilleGLW15}, a linear time algorithm is presented that constructs from a tree of size $n$ with node labels
from the set $\Sigma$ a top dag of size $O(n / \log^{0.19}_\sigma n)$, where $\sigma = \max\{ 2, |\Sigma|\}$ (note that this definition of $\sigma$ avoids the base 1 in the logarithm). Later, in  \cite{Hubschle-Schneider15} this
bound was improved to $O(n \log\log n / \log_\sigma n)$ (for the same algorithm as in \cite{BilleGLW15}).
It is open whether this bound can be improved to $O(n / \log_\sigma n)$ for the construction from~\cite{BilleGLW15}.
A simple counting argument shows that $O(n / \log_\sigma n)$ is the information-theoretic lower bound for the size
of the top dag.
We present a new linear-time top dag construction that achieves this bound.
In addition, our construction has two properties that are also true for the original
construction of Bille et al.~\cite{BilleGLW15}: (i) the size of the constructed top dag is bounded by
$O(|\mathrm{dag}(t)| \log n)$, where $\mathrm{dag}(t)$ is the minimal dag of $t$ and (ii)
the height of the top dag is bounded by $O(\log(n))$.
Concerning (i) it was shown in \cite{BilleFG17} that the $\log(n)$-factor is unavoidable.
The logarithmic bound on the height
of the top dag in (ii) is important in order to get the logarithmic time bounds
for the querying operations (e.g., computing the label, parent node, first child, right sibling, depth, height, nearest common ancestor, etc.~of nodes given by their preorder numbers)
in~\cite{BilleGLW15}.

Our construction is based on a modification of the algorithm
BU-Shrink (bottom-up shrink) from~\cite{GanardiHJLN17}, which constructs in linear time
a TSLP of size $O(n/\log_\sigma n)$ for a given binary tree. In fact, we construct the top dag
in two phases: in the first step we apply the modification of BU-Shrink, whereas in a second phase
we apply the top dag construction of Bille et al.

\section{Preliminaries}

Let $\Sigma$ be a finite alphabet.
By $\mathcal{T}$ we denote the set of ordered labelled trees with labels from $\Sigma$.
Here, ``ordered'' means that the children of every node are linearly ordered.
Also note that trees are unranked in the sense that the node label does not determine
the number of children of a node, which is also called
the degree of the node.
For a tree $t \in \mathcal{T}$ we denote the label of its root by $r_t$,
the set of its nodes by $V(t)$ and the set of its edges by $E(t)$. By $\lambda_t(v)$ we
denote the label of the node $v$ in $t$.
For $v \in V(t)$ we denote with $t(v)$ the subtree of $t$ that is rooted in $v$.

A {\em cluster of rank $0$} is just a tree $t$. A {\em cluster of rank $1$} consists of a tree $t$ together with
a distinguished leaf node $\ell_t$ that we call the {\em bottom boundary node} of $t$.
In both cases, the root $r_t$ is called the {\em top boundary node}.
Let $C_i$ be the set of all clusters of rank $i \in \{0,1\}$ and let
$C = C_0 \cup C_1$.  With $\rank(t)$ we denote the rank of the cluster $t$.
An atomic cluster consists of a single edge, i.e., it is a tree with two nodes.

We define two partial binary merge operations $\mergv,\mergh : C
\times C \to C$:
\begin{enumerate}
\item $s \mergv t$ (the vertical merge of $s$ and $t$)
is only defined if $s \in C_1$ and $\lambda_s(\ell_s) = \lambda_t(r_t)$.
We obtain $s \mergv t$ by taking the disjoint union of $s$ and $t$ and then merging $\ell_s \in V(s)$ with $r_t \in V(t)$ (note that
this is possible since the labels coincide). The rank of $s \mergv t$ is $\rank(t)$ and if $t \in C_1$, then
the bottom boundary node of $s \mergv t$ is $\ell_t$.
\item $s \mergh t$ (the horizontal merge of $s$ and $t$)
is only defined if $\rank(s)+\rank(t)\leq 1$ and  $\lambda_s(r_s) = \lambda_t(r_t)$.
We obtain $s \mergh t$ by taking the disjoint union of $s$ and $t$ and then merging $r_s \in V(s)$ with $r_t \in V(t)$ (note that
this is possible since the labels coincide).
The rank of $s \mergh t$ is $\rank(s)+\rank(t)$.
In case $s \in C_1$ (resp., $t \in C_1$), the bottom boundary node of $s \mergh t$
is $\ell_s$ (resp., $\ell_t$).
\end{enumerate}
The (minimal) directed acyclic graph (dag) for a tree $t$ is obtained by keeping
for every subtree $t'$ of $t$ only one occurrence of that subtree and replacing
every edge that goes to a node in which a copy of $t'$ is rooted by an edge to
the unique chosen occurrence of $t'$. We denote this dag as $\mathrm{dag}(t)$.
Note that the number of nodes in $\mathrm{dag}(t)$ is the number of different
subtrees that occur in $t$.

We can now define {\em top trees} and {\em top dags}. A top tree is a binary node-labelled ordered tree,
where every internal node is labelled with one of the two operations $\mergv, \mergh$ and
every leaf is labelled with an atomic cluster plus a bit for the rank of the cluster. The latter information
can be represented by a triple $(a,b,i)$ with $a,b \in \Sigma$ and $i \in \{0,1\}$.
A top tree $T$ can be evaluated to a tree $t \in \mathcal{T}$ by recursively applying the merge operations
at its inner nodes (this might lead to an undefined value since the merge operations are only partially defined).
We say that $T$ is a top tree for $t$ and $\mathrm{dag}(T)$ is a top dag for $t$.

Let $t \in \mathcal{T}$ be a tree.
A {\em subcluster} of $t$ of rank one is an induced subgraph of $t$ that is obtained as follows:
Take a node $v \in V(t)$ with the ordered sequence of children $u_1,\ldots, u_d$
and let $1 \leq i \leq j \leq d$.
Let $u \in V$ be a node that belongs to one of the subtrees $t(u_i), \ldots, t(u_j)$.
Then the tree is induced by the nodes in $\{u,v\} \cup (\bigcup_{s = i}^j t(u_s)  \; \setminus \; t(u))$.
The node $u$ (resp., $v$) is the top (resp., boundary) node of the cluster.
A subcluster of $t$ of rank zero is obtained in the same way, except that we take the tree
induced by the nodes in $\{u\} \cup \bigcup_{s = i}^j t(u_s)$.
Its top boundary node is $u$.
Note that every edge of $t$ is a subcluster of $t$.
We identify a subcluster of $t$ with the set of edges of $t$ belonging to the subcluster.
If $T$ is a top tree for $t$ then it follows easily by induction
that every subtree of $T$ evaluates to an isomorphic copy of a subcluster of $t$.

\section{Optimal worst-case compression}

We can now state and prove the main result of this paper:

\begin{theorem} \label{thm-n/logn}
Let $\sigma = \max\{|\Sigma|, 2\}$.
There is a linear time algorithm that computes from a given tree $t \in \mathcal{T}$
with $n \geq 1$ edges a top dag of height $O(\log n)$, whose size is bounded
by $O(n / \log_\sigma n)$ and $O(|\mathrm{dag}(t)| \cdot \log n)$.
\end{theorem}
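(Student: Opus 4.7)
The plan is to follow the two-phase strategy indicated in the introduction. Phase~1 applies a cluster-based adaptation of the BU-Shrink algorithm of Ganardi et al.~to reduce $t$ to a much smaller ``skeleton'' tree $t'$ of size $O(n/\log_\sigma n)$, while recording a partial top tree for each contracted piece. Phase~2 runs the Bille et al.~top dag construction on $t'$. All size and height bounds for the final top dag will be read off by combining the two phases.

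For Phase~1, I would process the nodes of $t$ in postorder, maintaining a partition of the edges of $t$ into pairwise disjoint subclusters in the sense of Section~2. Fix the threshold $M = \lfloor c \log_\sigma n \rfloor$ for a small constant $c > 0$. At each node $v$ I try to \emph{merge} (via $\mergv$ and $\mergh$) the clusters handed up from $v$'s children and from the edge to $v$'s parent, keeping the running cluster as long as its size stays below $2M$; otherwise I \emph{emit} the current cluster and restart. A greedy analysis analogous to the one in Ganardi et al.~shows that one obtains $O(n/M) = O(n/\log_\sigma n)$ emitted clusters that together cover $E(t)$, each equipped with a top tree of size $O(M)$. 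The contracted tree $t'$ is obtained by replacing each emitted cluster by a single edge whose label encodes the (top/bottom) boundary labels of that cluster, so $|t'| = O(n/\log_\sigma n)$. A crucial counting step: there are at most $\sigma^{O(M)} = n^{O(c)}$ many clusters of size $\le 2M$, hence (choosing $c$ small enough) the partial top trees for all emitted clusters can be shared in a common dag of size $o(n/\log_\sigma n)$.

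For Phase~2, I would apply the original top dag construction of Bille et al.~to $t'$. It produces in linear time a top dag for $t'$ of size $O(|t'|)$ and height $O(\log|t'|) = O(\log n)$, and satisfies the $|\mathrm{dag}(\cdot)| \cdot \log n$ bound on its own input~\cite{BilleGLW15,BilleFG17}. Gluing the Phase~2 top dag to the Phase~1 cluster dag at the leaves of the former yields a single top dag for $t$: whenever Phase~2 refers to an atomic edge of $t'$, I substitute the (shared) root of the Phase~1 top tree for the corresponding cluster. The total size is $O(n/\log_\sigma n) + o(n/\log_\sigma n) = O(n/\log_\sigma n)$. The height increases by at most $O(M) = O(\log n)$ (the depth of each inserted Phase~1 top tree), so the overall height is still $O(\log n)$. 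For the $O(|\mathrm{dag}(t)| \cdot \log n)$ bound, I would argue that $|\mathrm{dag}(t')| = O(|\mathrm{dag}(t)|)$ because identical subtrees of $t$ are contracted to identical subtrees of $t'$ (the Phase~1 rule is deterministic in the postorder traversal), and then invoke the corresponding bound of Bille et al.~applied to $t'$, together with the observation that the Phase~1 part contributes only $o(n/\log_\sigma n)$ nodes, which is dominated.

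The main obstacle I expect is Phase~1: one must simultaneously (a)~make the greedy merging rule work with \emph{both} horizontal and vertical merges on unranked trees while respecting the compatibility conditions of $\mergv$ and $\mergh$, (b)~guarantee a partition of $E(t)$ into exactly $O(n/\log_\sigma n)$ clusters of size $\Theta(\log_\sigma n)$, and (c)~make the rule canonical enough that isomorphic subtrees of $t$ are contracted to isomorphic subtrees of $t'$ (which is needed for the $|\mathrm{dag}(t)|\cdot\log n$ bound). The first two points closely mirror the binary-tree argument of Ganardi et al., but the unranked setting requires extra care at high-degree nodes, where one must interleave horizontal merges of consecutive child clusters with a single vertical merge at the parent.
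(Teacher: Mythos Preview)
Your two-phase strategy is exactly the paper's: a BU-Shrink--style clustering that shrinks $t$ to a tree $t'$ with $O(n/\log_\sigma n)$ edges, followed by the Bille et al.\ construction on $t'$, with the Phase~1 top trees substituted at the leaves. The $O(n/\log_\sigma n)$ size bound (via counting the distinct small top trees) and the $O(\log n)$ height bound go through essentially as you outline.

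The genuine gap is in your argument for the $O(|\mathrm{dag}(t)|\cdot\log n)$ bound. You bound the Phase~2 part by $O(|\mathrm{dag}(t')|\cdot\log n)$ and argue $|\mathrm{dag}(t')|=O(|\mathrm{dag}(t)|)$ via determinism, but for the Phase~1 part you only say it contributes $o(n/\log_\sigma n)$ nodes and is therefore ``dominated''. That inference is wrong: when $|\mathrm{dag}(t)|$ is tiny (say $O(\log n)$, as for a complete binary tree), the target $O(|\mathrm{dag}(t)|\cdot\log n)=O(\log^2 n)$ certainly does not dominate $o(n/\log_\sigma n)$. You must show that the total size of the \emph{distinct} Phase~1 top trees is itself $O(|\mathrm{dag}(t)|\cdot\log n)$, and your proposal does not do this.

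The paper handles this differently from your determinism idea: it simply runs Phase~1 on $\mathrm{dag}(t)$ rather than on $t$. The resulting dag $d'$ satisfies $|d'|\le|\mathrm{dag}(t)|$ trivially (edges are only removed), so in one stroke (i)~$|\mathrm{dag}(t')|\le|d'|\le|\mathrm{dag}(t)|$, and (ii)~there are at most $|\mathrm{dag}(t)|$ many Phase~1 top trees $T^m_{u,v}$, each of size $O(\log_\sigma n)$, giving the required $O(|\mathrm{dag}(t)|\cdot\log n)$ for the Phase~1 contribution. Your canonical-postorder route might be salvageable---if Phase~1 is truly canonical on subtrees, the number of non-isomorphic emitted clusters should be bounded by the number of dag edges---but that is not the argument you wrote, and making it precise is exactly the obstacle~(c) you flagged.
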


\begin{proof}
We first prove the theorem without the bound $O(|\mathrm{dag}(t)| \cdot \log n)$
on the size of the constructed top dag. In a second step, we explain how to modify
the algorithm in order to get the $O(|\mathrm{dag}(t)| \cdot \log n)$ bound.

Take a tree $t \in \mathcal{T}$ with $n \geq 1$ edges and let $\sigma = \max\{ |\Sigma|, 2 \}$.
We build from $t$ a sequence of trees $t_0, t_1, \ldots t_m$, where every edge $(u,v) \in E(t_i)$
($u$ is the parent node of $v$)
is labelled with a subcluster $c^i_{u,v}$ of $t$.
If $v$ is a leaf of $t_i$, then $c^i_{u,v}$ is a subcluster of rank $0$ with top boundary node $u$, otherwise
$c^i_{u,v}$ is a subcluster of rank $1$ with top boundary node $u$ and bottom boundary node $v$.
The number of edges in the subcluster $c^i_{u,v}$ is also called the weight $\gamma^i_{u,v}$ of the edge $(u,v)$.

Our algorithm does not have to store the subclusters explicitly but only their weights. Moreover, the
algorithm builds for every edge $(u,v) \in E(t_i)$ a top tree $T^i_{u,v}$.
The invariant of the algorithm is that $T^i_{u,v}$ evaluates to
(an isomorphic copy of) the subcluster
$c^i_{u,v}$. The top trees $T^i_{u,v}$ are stored as pointer structures, but below we write them for better readability
as expressions using the operators $\mergv$ and $\mergh$.

The initial tree $t_0$ is the tree $t$, where $c^0_{u,v} = \{(u,v)\}$ for
every edge $(u,v) \in E(t_0)$. This is a subcluster of rank $0$
if $v$ is a leaf, and of rank $1$ otherwise. Hence, we set
$\gamma^0_{u,v} = 1$ and
$T^0_{u,v} = (\lambda_t(u), \lambda_t(v), i)$,
where $i$ is the rank of subcluster $c^0_{u,v}$.

Let us now fix a number $k \leq n$ that will be made precise later. Our algorithm proceeds as follows:
Let $t_i$ be the current tree. We proceed by a case distinction. Ties between the following three
cases can be broken in an arbitrary way. The updating for the subclusters $c^i_{u,v}$ is only shown
to give a better intuition for the algorithm; it is not part of the algorithm.

\medskip
\noindent
{\em Case 1.}
 There exist edges $(u,v), (v,w) \in E(t_i)$ of weight at most $k$ such that $w$ is the unique child
of $v$. We obtain $t_{i+1}$ from $t_i$ by
({\it i}\/) removing the node $v$,
and ({\it ii}) replacing  the edges $(u,v), (v,w)$ by the edge $(u,w)$.
Moreover, we set
\begin{eqnarray*}
c^{i+1}_{u,w} & := & c^i_{u,v} \cup c^i_{v,w}, \\
T^{i+1}_{u,w} & := & T^i_{u,v} \mergv T^i_{v,w}, \\
\gamma^{i+1}_{u,w} &:=& \gamma^i_{u,v} + \gamma^i_{v,w} .
\end{eqnarray*}
For all edges
 $(x,y) \in E(t_i) \setminus \{(u,v), (v,w) \}$ we set $c^{i+1}_{x,y} := c^i_{x,y}$,
 $T^{i+1}_{x,y} := T^i_{x,y}$ and
 $\gamma^{i+1}_{x,y} := \gamma^i_{x,y}$.

\medskip
\noindent
{\em Case 2.} 
There exist edges $(u,v), (u,w) \in E(t_i)$ of weight at most $k$
such that $v$ is a leaf and the left sibling of $w$.
Then $t_{i+1}$ is obtained from $t_i$ by removing the edge $(u,v)$.
Moreover, we set
\begin{eqnarray*}
c^{i+1}_{u,w} &:=& c^i_{u,v} \cup c^i_{u,w},\\
T^{i+1}_{u,w} &:=& T^i_{u,v} \mergh T^i_{u,w}, \\
\gamma^{i+1}_{u,w} &:=& \gamma^i_{u,v} + \gamma^i_{u,w}.
\end{eqnarray*}
For all edges
$(x,y) \in E(t_i) \setminus \{(u,v), (u,w) \}$ we set $c^{i+1}_{x,y} := c^i_{x,y}$,
$T^{i+1}_{x,y} := T^i_{x,y}$ and
$\gamma^{i+1}_{x,y} := \gamma^i_{x,y}$.

\medskip
\noindent
{\em Case 3.} 
There exist edges $(u,v), (u,w) \in E(t_i)$ of weight at most $k$
such that $v$ is a leaf and the right sibling of $w$.
Then $t_{i+1}$ is obtained from $t_i$ by removing the edge $(u,v)$.
Moreover, we set
\begin{eqnarray*}
c^{i+1}_{u,w} &:=& c^i_{u,w} \cup c^i_{u,v}, \\
T^{i+1}_{u,w} &:=& T^i_{u,w} \mergh T^i_{u,v}, \\
\gamma^{i+1}_{u,w} &:=& \gamma^i_{u,w} + \gamma^i_{u,v} .
\end{eqnarray*}
For all edges
$(x,y) \in E(t_i) \setminus \{(u,v), (u,w) \}$ we set $c^{i+1}_{x,y} := c^i_{x,y}$,
$T^{i+1}_{x,y} := T^i_{x,y}$ and
$\gamma^{i+1}_{x,y} := \gamma^i_{x,y}$.

\medskip
\noindent
If none of the above three cases holds, then the algorithm stops. Let $t' = t_m$ be the final tree that we computed.
Note that every edge $(u,v)$ of $t'$ has weight at most $2k$.
We now bound the number of edges of $t'$:

\medskip
\noindent
{\em Claim:} The number of edges of $t'$ is at most $\frac{8n}{k}$

\medskip
\noindent
Let $n'$ be the number of edges of $t'$. Thus, $t'$ has $n'+1$ many nodes.
If $n' \leq 1$ we are done, since $\frac{8n}{k} \geq 8$.
So, assume that $n' \geq 2$. Let $U$ be the set of all nodes of degree at most one except for the root node.
We must have $|U| \geq n'/2$.
For every node $u \in U$, let $p(u)$ be its parent node.
We now assign to certain edges of $t'$ (possibly several) markings by doing the following for every
$u \in U$: If the weight of the edge
$(p(u),u)$ is larger than $k$ then we assign to $(p(u),u)$ a marking.
Now assume that the weight of $(p(u),u)$ is at most $k$.  If $u$ has degree one
and $v$ is the unique child of $u$, then the weight of $(u,v)$ must be larger than $k$
(otherwise, we would merge the edges $(p(u), u)$ and $(u,v)$), and we assign a marking
to $(u,v)$.
Let us now assume that $u$ is a leaf.
Since $t'$ has at least two edges, one of the following
three edges must exist:
\begin{itemize}
\item $(p(u), v)$, where $v$ is the left sibling of $u$,
\item $(p(u), v)$, where $v$ is the right sibling of $u$,
\item $(v,p(u))$ where $p(u)$ has degree one.
\end{itemize}
Moreover, one of these edges must have weight more than $k$.
We choose such an edge and assign a marking to it. The following then holds:
\begin{itemize}
\item Markings are only assigned to edges of weight more than $k$.
\item Every edge of $t'$ can get at most 4 markings.
\item In total, $t'$ contains $|U| \geq n'/2$ many markings.
\end{itemize}
Since the sum of all edge weights  of $t'$ is $n$, we obtain
\[
n \geq \frac{|U| \cdot k}{4} \geq     \frac{n' \cdot k}{8}
\]
Thus, we have $n' \leq \frac{8n}{k}$.

We now build a top tree $T$ for $t$ as follows:
Construct a top tree $T'$  for $t'$ of height $O(\log n)$ using the algorithm from
\cite{BilleGLW15}. Consider a leaf $e$ of $T'$.
It corresponds to an edge $(u,v) \in E(t')$. In the process of folding the cluster
$c^m_{u,v}$ into the edge $(u,v)$ we have constructed the top tree $T_e := T_{u,v}^m$
that evaluates to the cluster  $c^m_{u,v}$. Therefore, we obtain a top tree $T$ for $t$ by
replacing every leaf $e$ of $T'$ by the top tree $T_e$.
To bound the minimal dag of $T$ we have to count the number of different subtrees of $T$.
This number can be upper bounded by the number of nodes in $T'$ (which is in $O(n/k)$) plus the number
of different top trees of size at most $2k$. The latter number can be bounded as follows:
A top tree for a tree from $\mathcal{T}$ is a binary tree with $2(|\Sigma|^2 + 1)$ many node labels
($2 |\Sigma|^2$ many different atomic clusters together with the bit for their rank and two labels for the two merge operations).
The number of binary trees with $m$ nodes is bounded by $4^m$. Hence, we can bound the number
of different top trees of size at most $2k$ by $2k \cdot r^k$ with $r = 64(|\Sigma|^2 +1)^2$.
Note that $\log(r) \in \Theta(\log \sigma)$, where $\sigma = \max \{ |\Sigma|,2\}$.
Take $k = \frac{1}{2} \log_r(n) \in \Theta(\log_\sigma n)$. We obtain the following upper bound on the number of non-isomorphic subtrees of $T$:
\[
O\left( \frac{n}{\log_\sigma n}\right) + \log_r(n) \cdot \sqrt{n} = O\left(\frac{n}{\log_\sigma n}\right)
\]
Moreover, the height of $T$ is in $O(\log n)$ since $T'$ and all $T_e$ have height $O(\log n)$.

It remains to argue that our algorithm can be implemented in linear time. The arguments are more or less
the same as for the analysis of BU-Shrink in~\cite{GanardiHJLN17}: The algorithm maintains for every node of $t_i$ its degree,
and for every edge $(u,v)$ its weight $\gamma^i_{u,v}$. Additionally, the algorithm maintains
a queue that contains pointers to all edges $(u,v)$ of $t_i$ having weight at most $k$ and such that $v$ has
degree one. Then every merging step can be done in constant time, and there are at most $n$ merging steps.
Finally, the minimal dag of $T$ can be computed in linear time by~\cite{DoSeTa80}.

We now explain the modification of the above algorithm such that the constructed top dag
has size $O(|\mathrm{dag}(t)| \cdot \log n)$. The main idea is that we perform the first phase
of the above algorithm (where the tree $t'$ is constructed) on $\mathrm{dag}(t)$ instead of $t$ itself.
Thus, the algorithm starts with the construction of $d := \mathrm{dag}(t)$ from $t$, which is possible in linear time~\cite{DoSeTa80}.
We now build from $d$ a sequence of dags $d_0, d_1, \ldots d_m$, where analogously
to the above construction every edge $(u,v)$
of $d_i$ is labelled with a weight $\gamma^i_{u,v}$ and a top tree $T^i_{u,v}$.
Since we are working on the dag, we cannot assign a unique subcluster $c^i_{u,v}$ of $t$ to the dag-edge
$(u,v)$. In fact, every edge of $d_i$ represents a set of isomorphic subclusters that are shared in the dag. The
top tree $T^i_{u,v}$ evaluates to an isomorphic copy of these subclusters.

The initial dag $d_0$ is the dag $d$ where every edge $(u,v) \in E(d_0)$
is labelled with the top tree $T^0_{u,v}$ that only consists of the leaf $(\lambda_d(u), \lambda_d(v), i)$
($\lambda_d$ assigns to every node of the dag its label from $\Sigma$)
where $i=0$ if $v$ is a leaf of the dag and $i=1$ otherwise.
We take the same threshold value $k \leq n$ as before.
Let $d_i$ be the current dag. Also the case distinction is the same as before:

\medskip
\noindent
{\em Case 1.}
 There exist edges $(u,v), (v,w) \in E(d_i)$ of weight at most $k$ such that $w$ is the unique child
of $v$. We obtain $d_{i+1}$ from $d_i$ by replacing  the edge $(u,v)$ by the edge $(u,w)$.
If the node $v$ has no incoming edge after this modification, we can remove
 $v$ and the edge $(v,w)$ (although this is not necessary for the further arguments).
The weights $\gamma^i_{x,y}$ and the top trees $T^i_{x,y}$ are updated in exactly the same
way as in the previous case~1.

\medskip
\noindent
{\em Case 2.}
There exist edges $(u,v), (u,w) \in E(d_i)$ of weight at most $k$
such that $v$ is a leaf of $d_i$ (i.e., has no outgoing edge) and the left sibling of $w$.
Then $d_{i+1}$ is obtained from $d_i$ by removing the edge $(u,v)$.
If $v$ has no more incoming edges after this modification, then we can also remove $v$.
The weights $\gamma^i_{x,y}$ and the top trees $T^i_{x,y}$ are updated in exactly the same
way as in the previous case~2.

\medskip
\noindent
{\em Case 3.}
There exist edges $(u,v), (u,w) \in E(d_i)$ of weight at most $k$
such that $v$ is a leaf and the right sibling of $w$.
Then $d_{i+1}$ is obtained from $d_i$ by removing the edge $(u,v)$.
If $v$ has no more incoming edges after this modification, then we can remove $v$.
The weights $\gamma^i_{x,y}$ and the top trees $T^i_{x,y}$ are updated in exactly the same
way as in the previous case~3.

\medskip
\noindent
If none of the above three cases holds, then the algorithm stops. Let $d' = d_m$ be the final dag that we computed.
We can unfold $d'$ to a tree $t'$. This tree $t'$ is one of the potential outcomes of the above tree version of the
algorithm. The rest of the construction is the same as before, i.e.,
we apply the algorithm of Bille et al.~in order to get a top tree $T'$ for $t'$, which is then combined with the top trees
$T^m_{u,v}$ in order to get a top dag $T$ for $t$. Let $D$ be the minimal dag of $T$.
The size bound $O(n / \log_\sigma n)$ and the height bound $O(\log n)$
for  $D$ follow from our previous arguments. It remains to show that the size of $D$ is
bounded by $O(|\mathrm{dag}(t)| \cdot \log n)$. Note that the size of the dag $d'  = d_m$ is bounded
by $|\mathrm{dag}(t)|$ (the number of nodes and edges does not increase when constructing $d_{i+1}$
from $d_i$). Moreover, every top tree $T^m_{u,v}$ has size $O(\log_\sigma n) = O(\log n / \log \sigma)$.
Therefore, the total size of all top trees $T^m_{u,v}$ is bounded by
$O(|\mathrm{dag}(t)| \cdot \log n / \log \sigma)$. Moreover, the construction of Bille et al.~ensures that
the  size of the top dag for $t'$ is bounded by $O(|d'| \cdot \log |t'|) \leq O(|\mathrm{dag}(t)| \cdot \log n)$
(since $d'$ is a dag for $t'$). This implies that the size of the final top dag $D$ is bounded by $O(|\mathrm{dag}(t)| \cdot \log n)$.
\end{proof}

\begin{example} \rm
  \tikzstyle{vertex}=[circle,draw]
  Let $\Sigma = \{a,b\}$. In the following example we show three nodes of a dag
$d_0$ (on the left) and a possible run of the merging algorithm up to $d_2$ (on the right):
  \\ \noindent
\begin{center}
  \begin{tikzpicture}
    \node (A) at (0,3) {};
    \node[vertex] (B) at (0,2) {$u$};
    \node[vertex] (C) at (0,1) {$v$};
    \node[vertex] (D) at (0,0) {$w$};

    \path (A) -- (B) node[midway, sloped] {$\dots$};
    \draw [->](B) edge node[right] {$(a,b,1)$} (C);
    \draw [->](C) edge[bend left] node[right] {$(b,a,0)$} (D);
    \draw [->](C) edge[bend right] node[left] {$(b,a,0)$} (D);
  \end{tikzpicture}
  \qquad
  \begin{tikzpicture}
    \node (A) at (0,3) {};
    \node[vertex] (B) at (0,2) {$u$};
    \node[vertex] (C) at (0,1) {$v$};
    \node[vertex] (D) at (0,0) {$w$};

    \path (A) -- (B) node[midway, sloped] {$\dots$};
    \draw [->](B) edge node[right] {$(a,b,1)$} (C);
    \draw [->](C) edge node[right] {$(b,a,0) \mergh (b,a,0)$} (D);
  \end{tikzpicture}
  \qquad
  \begin{tikzpicture}
    \node (A) at (0,3) {};
    \node[vertex] (B) at (0,2) {$u$};
    \node[vertex] (D) at (0,0) {$w$};

    \path (A) -- (B) node[midway, sloped] {$\dots$};
    \draw [->](B) edge node[right] {$(a,b,1) \mergv ((b,a,0) \mergh (b,a,0))$} (D);
  \end{tikzpicture}
\end{center}
In the first step we merge the two atomic clusters $(b,a,0)$ using $\mergh$.
We can do this by removing the left edge $(v,w)$ or the right edge $(v,w)$,
since $w$ is a leaf. Next, we merge the cluster $(a,b,1)$ using $\mergv$.
This is done by removing the edge $(u,v)$ and replacing it with the edge $(u,w)$.
Since $v$ now has no incoming edges, $(v,w)$ is removed.
The weight of the edge $(u,w)$ is $3$ since its cluster is a tree with three
edges. This means that, in order to be able to perform these two merges, the
starting tree $t$ must have size at least
$2k(64(|\Sigma|^2+1)^2)^k$, which is $10240000$, since $k=2$ and $|\Sigma|=2$.
\end{example}

\end{document}